\documentclass[sigconf,nonacm]{acmart}
\AtBeginDocument{%
  }

\usepackage[colorinlistoftodos,prependcaption]{todonotes}

\usepackage{graphicx}
\usepackage{svg}
\usepackage{tikz}
\usepackage{subcaption}
\usetikzlibrary{arrows.meta, fit, backgrounds, decorations.pathreplacing, calc}
\usepackage{hyperref}
\usepackage[nameinlink,noabbrev,capitalize]{cleveref}

\usepackage{algorithm}
\usepackage{algpseudocode}
\algnewcommand{\Input}{\item[\textbf{Input:}]}
\algnewcommand{\Output}{\item[\textbf{Output:}]}
\newcommand{\algseparator}{\vspace{0.2em}\hrule\vspace{0.2em}}
\newcommand{\codeLink}{\url{https://github.com/RickHeijden/rangereach}}

\newtheorem*{remark}{Remark}

\newcommand{\bigO}{
    \ensuremath{\mathcal{O}}
}
\newcommand{\reach}{\rightsquigarrow}
\newcommand{\query}[1]{\textsc{#1}}

\begin{document}

\title{Even Faster Geosocial Reachability Queries}

\author{Rick van der Heijden}
\affiliation{%
  \institution{Eindhoven University of Technology}
  \city{Eindhoven}
  \country{The Netherlands}}
\email{r.h.j.v.d.heijden@student.tue.nl}

\author{Nikolay Yakovets}
\affiliation{%
  \institution{Eindhoven University of Technology}
  \city{Eindhoven}
  \country{The Netherlands}}
\email{hush@tue.nl}

\author{Thekla Hamm}
\affiliation{%
  \institution{Eindhoven University of Technology}
  \city{Eindhoven}
  \country{The Netherlands}}
\email{t.l.s.hamm@tue.nl}

\renewcommand{\shortauthors}{van der Heijden et al.}

\begin{abstract}
Geosocial reachability queries (\query{RangeReach}) determine whether a given vertex in a geosocial network can reach any spatial vertex within a query region. The state-of-the-art 3DReach method answers such queries by encoding graph reachability through interval labelling and indexing spatial vertices in a 3D R-tree. We present 2DReach, a simpler approach that avoids interval labelling entirely. Like 3DReach, 2DReach collapses strongly connected components (SCCs) into a DAG, but instead of computing interval labels, it directly stores a 2D R-tree per component over all reachable spatial vertices. A query then reduces to a single 2D R-tree lookup. We further propose compressed variants that reduce storage by excluding spatial sinks and sharing R-trees between components with identical reachable sets. Experiments on four real-world datasets show that 2DReach achieves faster index construction than 3DReach, with the compressed variant yielding the smallest index size among all methods. 2DReach delivers competitive or superior query performance with more stable response times across varying query parameters.
\end{abstract}

\keywords{geosocial reachability, spatial indexing, R-tree, location-based social networks}

\maketitle

\section{Introduction}
Location-Based Social Networks (LBSNs) such as Yelp and Foursquare combine social interactions between users with spatial activities at physical venues. In these networks, users follow or befriend other users and check in at, review, or rate venues. The resulting graphs are \emph{geosocial}: vertices represent both users and venues, edges capture social and spatial relationships, and a subset of vertices carry geographic coordinates. A fundamental query on such networks is the \emph{Geosocial reachability query} (\query{RangeReach})~\cite{sun_georeach_2016}: given a query vertex $u$ and a spatial region $R$, determine whether $u$ can reach any spatial vertex located within $R$. \query{RangeReach} finds application in points-of-interest recommendation, geo-advertising, and epidemiological contact tracing~\cite{bouros_fast_2025}.

Existing methods for \query{RangeReach} face a tradeoff between the social (graph reachability) and spatial (range query) components of the problem. GeoReach~\cite{sun_georeach_2016} augments each vertex with spatial reachability summaries and prunes a graph traversal at query time, but its performance degrades when much of the graph must be explored. The state-of-the-art 3DReach~\cite{bouros_fast_2025} eliminates graph traversal by first collapsing strongly connected components (SCCs) into super-vertices, then computing interval labels that encode reachability in the resulting DAG. Spatial vertices are indexed in a 3D R-tree with the third dimension representing the post-order number used for interval-based reachability testing. While effective, this approach has two drawbacks: the interval labelling construction is costly, and the 3D R-tree requires larger bounding boxes (six floats per node versus four) and more complex range queries.

We propose 2DReach, which simplifies the index structure by avoiding interval labelling entirely. Like 3DReach, 2DReach collapses SCCs into super-vertices to form a DAG. However, instead of computing interval labels and lifting spatial data to 3D, 2DReach directly precomputes and stores, for each component, a 2D R-tree over all spatial vertices reachable from that component. A \query{RangeReach} query then reduces to a single 2D R-tree lookup in the query vertex's component. This design trades some additional R-tree storage for substantially simpler index construction and query execution.

We further introduce compressed variants that reduce storage by excluding spatial sinks from the SCC decomposition and sharing R-trees between components with identical reachable sets. Experiments on four real-world LBSN datasets show that 2DReach achieves faster index construction than 3DReach, with the most compressed variant yielding the smallest index size among all methods. 2DReach delivers competitive or superior query performance with substantially more stable response times across varying query parameters.

\section{Preliminaries}
This section introduces the necessary notations to understand geosocial reachability queries.

\paragraph{Geosocial Graph} A geosocial graph $G = (V, E, \delta)$ is a directed graph extended with spatial information. $V$ is the set of vertices and $E\subseteq V\times V$ is the set of directed edges. The function $\delta : V\rightarrow \mathbb{R}^{2}\cup\{\perp\}$ maps vertices to a 2-dimensional Euclidean point\footnote{Without loss of generality, this paper assumes 2D Euclidean space; for any other metric spaces, the solution can be easily extended.} or $\perp$ if the vertex has no spatial information. Lastly, $P = \{v\in V \ |\ \delta(v) \neq \perp\}$ denotes the set of spatial vertices. 

\paragraph{Queries} Given a directed graph $G = (V, E)$ and two vertices $u, v \in V$, a \textit{graph reachability query} ($u\overset{?}{\reach} v$) determines whether $u$ can reach $v$ iff there exists a path from $u$ to $v$. We denote this as $u\reach v$ or $u\not\reach v$, respectively. 

\begin{definition}[\textbf{RangeReach}]\label{def:range-reach}
    Given a geosocial graph $G = (V, E, \delta)$, a query vertex $u$, and an axis-aligned rectangle $R\subset \mathbb{R}^{2}$ (spatial region)\footnote{Without loss of generality, this paper assumes an axis-aligned rectangle for querying. However, the proposed method can be easily extended to handle other types of geometric objects for querying, e.g., polygons.}, the \textit{Geosocial reachability query} (\query{RangeReach}) determines whether $u$ can geosocially reach region $R$ iff there exists a path from $u$ to a vertex $v$ such that $\delta(v)$ is located in $R$. Formally:
    \[\operatorname{RangeReach}(G, u, R)=\begin{cases}
        \mathit{TRUE} & \text{if}\ \exists v\in V: u\reach v \land \delta(v) \in R\\
        \mathit{FALSE} & \text{otherwise}
    \end{cases}\]
\end{definition}

\paragraph{Strongly Connected Component Decomposition} Let $\mathcal{D} = (V_{\mathcal{D}}, E_{\mathcal{D}})$ be the Strongly Connected Component (SCC) decomposition of graph $G$. Each SCC of $G$ is represented by a single super-vertex in $V_{\mathcal{D}}$ and $E_{\mathcal{D}}$ contains a directed edge between two super-vertices if $G$ contains a directed edge between a pair of vertices of those SCCs. By construction, $\mathcal{D}$ is a directed acyclic graph (DAG).

\section{Related Work}
Sun and Sarwat~\cite{sun_georeach_2016} introduced the \query{RangeReach} query and proposed GeoReach, the first dedicated method for geosocial reachability. GeoReach augments every vertex in the graph with precomputed spatial reachability information in the form of a SPA-Graph, classifying vertices into three types of increasing precision: a single reachability bit, a reachability minimum bounding rectangle, or a set of hierarchical grid cells. Given a query, GeoReach traverses the SPA-Graph and uses these annotations to prune paths that cannot satisfy the spatial predicate. Although effective, GeoReach still requires graph traversal at query time; its performance degrades for queries with a negative answer and for graphs with many strongly connected components, as a large portion of the SPA-Graph may need to be explored.

Bouros et al.~\cite{bouros_fast_2025} proposed the state-of-the-art 3DReach method, which eliminates graph traversal entirely. 3DReach first collapses SCCs into super-vertices, then computes an interval-based labeling~\cite{agrawal_efficient_1989} that assigns each component a set of post-order intervals encoding its reachable descendants. Each spatial vertex is mapped to a three-dimensional point $(x, y, \mathit{post})$ and indexed by a 3D R-tree. A \query{RangeReach} query is answered by issuing one 3D range query per interval label of the query vertex. A variant, 3DReach-Rev, uses reversed labelling to represent spatial vertices as vertical line segments, reducing every query to a single 3D range query at the cost of indexing more complex geometric objects. However, interval labelling construction is costly, and 3D R-trees require larger bounding boxes and more complex queries than their 2D counterparts.

\begin{figure}[t]
    \centering
    \begin{subfigure}{\linewidth}
        \centering
        \begin{tikzpicture}[
            vertex/.style={circle, draw, thick, minimum size=12pt, inner sep=0pt, font=\scriptsize},
            spatial/.style={vertex, fill=gray!50},
            arr/.style={-{Stealth[length=4pt, width=3pt]}, thick},
            lbl/.style={font=\scriptsize},
            sccnode/.style={draw, thick, rounded corners=3pt, minimum size=14pt, inner sep=2pt, font=\scriptsize},
            sccbox/.style={draw, rounded corners=8pt, dashed, thick, inner sep=6pt, font=\scriptsize},
            x={(-0.4cm,-0.3cm)}, y={(0.9cm,-0.2cm)}, z={(0cm,0.8cm)}
        ]
            \fill[gray!15] (0,0,0) -- (5,0,0) -- (5,4,0) -- (0,4,0) -- cycle;
            \draw[thick] (0,0,0) -- (5,0,0) -- (5,4,0) -- (0,4,0) -- cycle;
            \fill[cyan!40, opacity=0.7] (2.0,2,0) -- (4.5,2,0) -- (4.5,4,0) -- (2.0,4,0) -- cycle;
            \draw[thick, cyan!70!black] (2.0,2,0) -- (4.5,2,0) -- (4.5,4,0) -- (2.0,4,0) -- cycle;
            \node[font=\small\bfseries] at (3.25,3,0) {$R$};
            
            \node[spatial] (f) at (1.3,-0.5,-1.2) {$f$};
            \node[spatial] (g) at (2,1.0,-0.2) {$g$};
            \node[spatial] (h) at (3.5,2.5,0.1) {$h$};
            \node[spatial] (i) at (4,3.8,0.2) {$i$};
            
            \node[vertex] (a) at (0.5,0.0,3.5) {$a$};
            \node[vertex] (b) at (1.5,-0.5,3) {$b$};
            \node[vertex] (c) at (1.5,1.5,3.0) {$c$};
            \node[vertex] (d) at (3.2,3.5,3.1) {$d$};
            \node[vertex] (e) at (3.2,4.5,2.6) {$e$};
            
            \node[
                sccbox,
                blue!60,
                fit=(a)(b)(c),
                label={[text=blue!70!black,font=\scriptsize]above:$C_1$}
                ] (C1) {};
            \node[
                sccbox,
                orange!80,
                fit=(d)(e),
                label={[text=orange!80!black,font=\scriptsize]above:$C_2$}
                ] (C2) {};
            
            \draw[arr] (a) -- (b);
            \draw[arr] (b) -- (c);
            \draw[arr] (c) -- (a);
            \draw[arr, bend left=20] (d) to (e);
            \draw[arr, bend left=20] (e) to (d);
            \draw[arr] (c) -- (d);
            
            \draw[arr, densely dashed, gray] (a) -- (f);
            \draw[arr, densely dashed, gray] (c) -- (g);
            \draw[arr, densely dashed, gray] (d) -- (h);
            \draw[arr, densely dashed, gray] (e) -- (i);
            
            \node[lbl, anchor=south, blue!70!black] at ($(C1.north)+(2,0,0.7)$) {$[0,5]$};
            \node[lbl, anchor=south, orange!80!black] at ($(C2.north)+(1.5,0,0.5)$) {$[1,3]$};
        \end{tikzpicture}
        \caption{}
    \end{subfigure}

    \vspace{0.8cm}

    \begin{subfigure}{\linewidth}
        \centering
        \begin{tikzpicture}[
            arr/.style={-{Stealth[length=4pt, width=3pt]}, thick},
            sccnode/.style={draw, thick, rounded corners=3pt, minimum size=14pt, inner sep=2pt, font=\scriptsize}
        ]
            \node[sccnode, draw=blue!70!black, thick] (sc1) at (0.8, 0) {$C_1$};
            \node[sccnode, draw=orange!80!black, thick] (sc2) at (3.5, 0) {$C_2$};
            \draw[arr] (sc1) -- (sc2);
            \node[font=\scriptsize, anchor=north, blue!70!black] at (0.8, -0.4) {2D R-tree$(\{f,g,h,i\})$};
            \node[font=\scriptsize, anchor=north, orange!80!black] at (3.5, -0.4) {2D R-tree$(\{h,i\})$};
        \end{tikzpicture}
        \caption{}
    \end{subfigure}

    \caption{Running example. (a)~Geosocial graph~$G$ with SCCs $C_1{=}\{a,b,c\}$ and $C_2{=}\{d,e\}$; spatial vertices (gray) on the 2D plane; query region~$R$ (cyan). Interval labels (shown per component) are used by 3DReach for reachability encoding. (b)~2DReach index: SCC DAG with one 2D R-tree per component storing reachable spatial vertices.}
    \label{fig:running-example}
\end{figure}
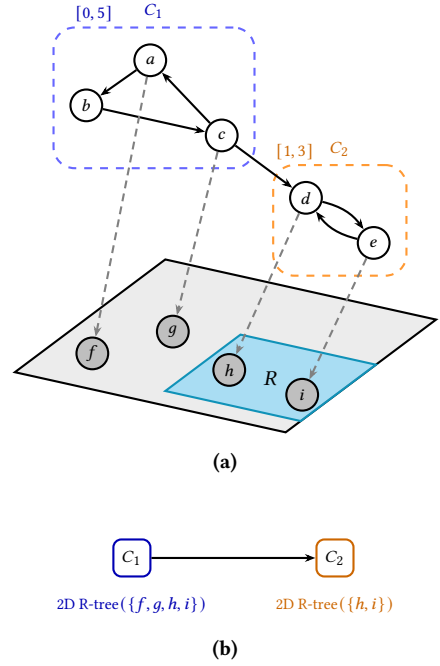

2DReach shares with 3DReach the property of avoiding graph traversal at query time, but takes a simpler approach. Both methods collapse SCCs into super-vertices to form a DAG. However, 3DReach then computes interval labels (\cref{fig:running-example}(a)) and indexes spatial vertices in a 3D R-tree, requiring one 3D range query per interval at query time. In contrast, 2DReach avoids interval labelling entirely: it directly precomputes and stores a 2D R-tree per component over the reachable spatial vertices (\cref{fig:running-example}(b)). This reduces each query to a single 2D R-tree lookup, with smaller bounding boxes and simpler range predicates.

\section{2DReach}\label{sec:2dreach}
In this section, we introduce 2DReach. First, we explain the baseline approach, and then we describe memory compression techniques.

Like 3DReach, 2DReach computes an SCC decomposition $\mathcal{D}$ of the input graph $G$ to handle cycles. However, instead of computing interval labels and indexing spatial vertices in a 3D R-tree, 2DReach directly stores a 2D R-tree\footnote{For higher-dimensional metric spaces, other spatial indexes can be used, e.g.\ higher-dimensional R-trees.} for each node $n$ in $\mathcal{D}$, containing all spatial vertices reachable from $n$.

To construct the index, 2DReach first computes the reachable spatial vertices for each node in $\mathcal{D}$. Subsequently, it traverses $\mathcal{D}$ in reverse topological order. Upon reaching a node, its reachable spatial vertices are obtained by merging those of its immediate children with its own. Lastly, for each node, an R-tree is built on the reachable spatial vertices. \cref{alg:2dreach-construction} outlines this process.

To query this structure, 2DReach takes a query vertex $u$ and region $R$ as input. Initially, it identifies the strongly connected component $c$ in $\mathcal{D}$ that contains $u$. Afterwards, 2DReach queries the R-tree that belongs to $c$ to determine whether it contains any points in the query region $R$. Finally, it returns the result of this query. The standard 2DReach in ~\cref{sec:experiments} employs pointers, for each node, to the R-tree of $c$ to improve query time.

As an illustration, consider the geosocial graph in \cref{fig:running-example}(a) with SCCs $C_1 = \{a,b,c\}$ and $C_2 = \{d,e\}$. Processing $\mathcal{D}$ in reverse topological order, 2DReach first builds an R-tree for $C_2$ over $\{h,i\}$, then for $C_1$ over $\{f,g\} \cup \{h,i\}$ (\cref{fig:running-example}(b)). To answer $\query{RangeReach}(G, a, R)$, 2DReach maps $a$ to $C_1$ and queries its R-tree with region $R$, returning TRUE since $h \in R$.

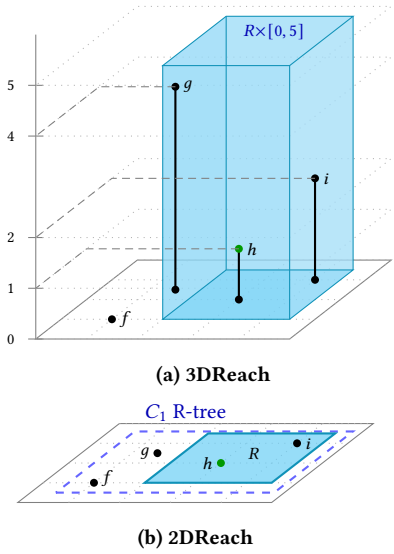
\begin{figure}[t]
\centering
\begin{tikzpicture}[
    scale=0.75,
    x={(0.9cm,0cm)},
    y={(0.45cm,0.35cm)},
    z={(0cm,0.9cm)},
    point/.style={circle, fill=black, minimum size=3pt, inner sep=0pt}
]
\draw[thin, gray] (0,0,0) -- (5,0,0) -- (5,4,0) -- (0,4,0) -- cycle;
\foreach \i in {1,2,3,4} {
    \draw[dotted, gray!60] (\i,0,0) -- (\i,4,0);
    \draw[dotted, gray!60] (0,\i,0) -- (5,\i,0);
}
\draw[thin, gray] (0,0,0) -- (0,0,6);
\foreach \z in {0,1,2,4,5} {
    \draw[thin, gray] (-0.15,0,\z) -- (0.15,0,\z);
    \node[font=\scriptsize, anchor=east] at (-0.25,0,\z) {\z};
    \draw[dotted, gray!60] (0,0,\z) -- (5,0,\z) -- (5,4,\z) -- (0,4,\z) -- (0,0,\z);
}

\fill[cyan!50, opacity=0.6] (2,1,0) -- (4.5,1,0) -- (4.5,3.5,0) -- (2,3.5,0) -- cycle;
\fill[cyan!40, opacity=0.6] (2,1,5) -- (4.5,1,5) -- (4.5,3.5,5) -- (2,3.5,5) -- cycle;
\fill[cyan!45, opacity=0.6] (2,1,0) -- (4.5,1,0) -- (4.5,1,5) -- (2,1,5) -- cycle;
\fill[cyan!35, opacity=0.6] (4.5,1,0) -- (4.5,3.5,0) -- (4.5,3.5,5) -- (4.5,1,5) -- cycle;
\fill[cyan!30, opacity=0.4] (2,3.5,0) -- (4.5,3.5,0) -- (4.5,3.5,5) -- (2,3.5,5) -- cycle;
\fill[cyan!35, opacity=0.5] (2,1,0) -- (2,3.5,0) -- (2,3.5,5) -- (2,1,5) -- cycle;

\draw[cyan!70!black] (2,1,0) -- (4.5,1,0) -- (4.5,3.5,0) -- (2,3.5,0) -- cycle;
\draw[cyan!70!black] (2,1,5) -- (4.5,1,5) -- (4.5,3.5,5) -- (2,3.5,5) -- cycle;
\draw[cyan!70!black] (2,1,0) -- (2,1,5);
\draw[cyan!70!black] (4.5,1,0) -- (4.5,1,5);
\draw[cyan!70!black] (4.5,3.5,0) -- (4.5,3.5,5);
\draw[cyan!70!black] (2,3.5,0) -- (2,3.5,5);

\node[point] at (1,1,0) {};
\node[font=\scriptsize, anchor=west] at (1,1,0) {$f$};

\draw[thick] (1.5,2.5,0) -- (1.5,2.5,4);
\node[point] at (1.5,2.5,0) {};
\node[point] at (1.5,2.5,4) {};
\node[font=\scriptsize, anchor=west] at (1.5,2.5,4) {$g$};
\draw[thin, densely dashed, gray] (1.5,2.5,4) -- (0,2.5,4) -- (0,0,4);

\draw[thick] (3,2,0) -- (3,2,1);
\node[point] at (3,2,0) {};
\node[point, fill=green!60!black] at (3,2,1) {};
\node[font=\scriptsize, anchor=west] at (3,2,1) {$h$};
\draw[thin, densely dashed, gray] (3,2,1) -- (0,2,1) -- (0,0,1);

\draw[thick] (4,3,0) -- (4,3,2);
\node[point] at (4,3,0) {};
\node[point] at (4,3,2) {};
\node[font=\scriptsize, anchor=west] at (4,3,2) {$i$};
\draw[thin, densely dashed, gray] (4,3,2) -- (0,3,2) -- (0,0,2);

\node[font=\scriptsize, blue!70!black] at (4.1,1.2,5.6) {$R{\times}[0,5]$};
\node[font=\small\bfseries] at (2.5,2,-1.5) {(a) 3DReach};
\end{tikzpicture}
\hspace{0.5cm}
\begin{tikzpicture}[
    scale=0.75,
    x={(0.9cm,0cm)},
    y={(0.45cm,0.35cm)},
    z={(0cm,0.9cm)},
    point/.style={circle, fill=black, minimum size=3pt, inner sep=0pt}
]
\draw[thin, gray] (0,0,0) -- (5,0,0) -- (5,4,0) -- (0,4,0) -- cycle;
\foreach \i in {1,2,3,4} {
    \draw[dotted, gray!60] (\i,0,0) -- (\i,4,0);
    \draw[dotted, gray!60] (0,\i,0) -- (5,\i,0);
}

\draw[blue!60, thick, dashed] (0.5,0.5,0) -- (4.8,0.5,0) -- (4.8,3.6,0) -- (0.5,3.6,0) -- cycle;
\node[font=\small, blue!70!black, anchor=south west] at (0.5,3.7,0) {$C_1$ R-tree};

\fill[cyan!50, opacity=0.7] (2,1,0) -- (4.5,1,0) -- (4.5,3.5,0) -- (2,3.5,0) -- cycle;
\draw[cyan!70!black, thick] (2,1,0) -- (4.5,1,0) -- (4.5,3.5,0) -- (2,3.5,0) -- cycle;
\node[font=\scriptsize] at (3.4,2.5,0) {$R$};

\node[point] at (1,1,0) {};
\node[font=\scriptsize, anchor=west] at (1,1,0.1) {$f$};

\node[point] at (1.5,2.5,0) {};
\node[font=\scriptsize, anchor=east] at (1.5,2.5,0) {$g$};

\node[point, fill=green!60!black] at (3,2,0) {};
\node[font=\scriptsize, anchor=east] at (3,2,0) {$h$};

\node[point] at (4,3,0) {};
\node[font=\scriptsize, anchor=west] at (4,3,0) {$i$};

\node[font=\small\bfseries] at (2.5,2,-1.5) {(b) 2DReach};
\end{tikzpicture}
\caption{Query execution for $\query{RangeReach}(G, a, R)$. (a)~3DReach extends region $R$ into a 3D cuboid using interval $[0,5]$. (b)~2DReach queries $C_1$'s 2D R-tree directly. Both find $h \in R$ (highlighted), but 2DReach avoids the third dimension.}
\label{fig:query-comparison}
\end{figure}

\begin{algorithm}[htbp]
    \caption{2DReach construction}
    \label{alg:2dreach-construction}
    \begin{algorithmic}[1]
        \Input Geosocial Graph $G = (V, E, \delta)$, SCC Decomposition $\mathcal{D}$
        \Output 2DReach index 
        \algseparator
        \State Let $S$ be a map from nodes of $\mathcal{D}$ to a subset of nodes of $V$.
        \State Let $R$ be a map from nodes of $\mathcal{D}$ to a 2D-R-tree.
        \ForAll{nodes $n\in \mathcal{D}$}
            \State Store in $S[n]$ all spatial nodes that are in $n$.
        \EndFor
        \State Let $\pi$ be the reverse topological order of nodes on $\mathcal{D}$.
        \For{$n\in \pi$}
            \State $\mathcal{C}\gets$ Children of $n$.
            \State $S[n]\gets S[n] \cup \bigcup\limits_{c\in\mathcal{C}}S[c]$.
            \State $R[n]\gets$ 2D-R-tree on $S[n]$.
        \EndFor
        \State \Return $R$
    \end{algorithmic}
\end{algorithm}

\subsection{Compressed 2DReach}\label{sec:compressed-2dreach}
Having discussed the baseline 2DReach, this section addresses strategies to reduce 2DReach's space usage. First, observe that LBSNs often contain spatial nodes without outgoing edges. Therefore, to reduce the space requirement, the SCC decomposition is constructed solely on the subgraph induced by the non-spatial vertices (the social subgraph). Consequently, this modification requires changing line 4 of \cref{alg:2dreach-construction} to include any spatial node that is a direct neighbour of node $n$. During query execution, 2DReach verifies whether each spatial vertex lies within the query region; the corresponding procedure is outlined in \cref{alg:2dreach-querying}.

This extension can be further generalised to include any strongly connected component in the social subgraph that has no reachable spatial vertices. This method also assumes that spatial vertices have no outgoing edges, although this is not a strict constraint in the data model. Therefore, in other data models, the compression procedure should be confined to spatial vertices without outgoing edges. As a consequence, line 1 of \cref{alg:2dreach-querying} should also verify whether $q$ has outgoing edges.

Additionally, when traversing in topological order, a parent may have the same reachable set of spatial vertices as one of its children. To further compress the 2DReach index, any such parent will share its R-tree with that child. Finally, 2DReach binds R-tree pointers to graph nodes to improve query performance; one method in \cref{sec:experiments} will explore using pointers at the SCC level, as outlined in \cref{alg:2dreach-querying}, to compress the index.

\begin{algorithm}[htbp]
    \caption{2DReach-Compressed querying}
    \label{alg:2dreach-querying}
    \begin{algorithmic}[1]
        \Require 2DReach Index $\mathcal{R}$, Map from node to SCC decomposition node $\mathcal{M}$
        \Input Geosocial Graph $G = (V, E, \delta)$, query vertex $q$, query region $R$
        \Output Boolean answer to \query{RangeReach}
        \algseparator
        \If{$q$ is a spatial vertex}
            \State \Return $\delta(q)\in R$
        \Else
            \State $c\gets \mathcal{M}[q]$
            \State \Return $\mathcal{R}[c][R]$
        \EndIf
    \end{algorithmic}
\end{algorithm}

\subsection{Complexity Analysis}\label{sec:complexity-analysis}
This section discusses the theoretical worst-case performance of the introduced 2DReach method. However, the introduction of 2DReach focuses on practical performance; therefore, see \cref{sec:experiments} for an experimental evaluation of 2DReach's practical performance. 

Let $G = (V, E, \delta)$ be a geosocial graph with set $P$ the spatial vertices and let $\mathcal{D} = (V_{\mathcal{D}}, E_{\mathcal{D}})$ be the SCC decomposition of $G$. Then, we define the cardinalities as follows $d = |V_{\mathcal{D}}|$, $e = |E_{\mathcal{D}}|$, $n = |V|$, $m = |E|$, and $p = |P|$.

\begin{theorem}
    Storing the 2DReach index uses $\bigO(d \cdot p)$ space.
\end{theorem}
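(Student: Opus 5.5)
The plan is to bound the size of the index component by component, and then sum over the $d$ nodes of $\mathcal{D}$. The index produced by \cref{alg:2dreach-construction} consists of the map $R$, which assigns one 2D R-tree to each node of $\mathcal{D}$. Depending on the variant, it may also contain node-to-component pointers. I will treat those auxiliary structures separately at the end.

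The first step is to bound the number of points stored per node. For a node $c \in V_{\mathcal{D}}$, the set $S[c]$ is by construction a subset of the spatial vertices $P$, so $|S[c]| \le p$. This holds because line 4 only inserts spatial vertices, and line 9 takes unions of sets that are already subsets of $P$; a short induction along the reverse topological order $\pi$ makes this formal. The same bound holds in the compressed variant, where line 4 also adds spatial direct neighbours, since these still lie in $P$. R-tree sharing only reduces the number of distinct trees, so it can only help.

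The second step uses the standard fact that an R-tree on $k$ points uses $\bigO(k)$ space. There are $k$ leaf entries, and because every internal node has fan-out at least $2$ (indeed at least the minimum fill $m \ge 2$), there are $\bigO(k)$ internal nodes. Each node stores a constant number of coordinates per entry (four floats per bounding box). Hence $R[c]$ uses $\bigO(|S[c]|) = \bigO(p)$ space, with the convention that an empty tree costs $\bigO(1)$. Summing over all $d$ nodes gives $\bigO(d \cdot p + d) = \bigO(d \cdot p)$, assuming $p \ge 1$; if $p = 0$ every tree is empty and the claim should be read with constant overhead.

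The main subtlety is the auxiliary data. The SCC-level pointer map $\mathcal{M}$ and the per-vertex pointers used by the standard 2DReach take $\bigO(n)$ space, and the DAG $\mathcal{D}$ itself takes $\bigO(d+e)$ space. These terms are not obviously dominated by $d \cdot p$. I would therefore state that the theorem concerns the R-tree index, with $\mathcal{D}$ and the vertex map counted as part of the input-sized structure, or else state the bound as $\bigO(d\cdot p + n)$. The intermediate sets $S$ are only needed during construction, and they obey the same $\bigO(d\cdot p)$ bound in any case.
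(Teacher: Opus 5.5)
Your proof follows the paper's own argument: one R-tree per node of $\mathcal{D}$, each over at most $p$ spatial vertices, with R-trees using linear space, giving $\bigO(d\cdot p)$ in total. The paper handles the auxiliary data by charging at most one pointer per SCC node (absorbed into $\bigO(d\cdot p)$) and tacitly treating the vertex-to-component map $\mathcal{M}$ as given; your remark that per-vertex pointers in the uncompressed variant cost $\Theta(n)$, which $d\cdot p$ need not dominate, sharpens a point the paper glosses over and is not a gap in your argument.
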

\begin{proof}
    An $n$-element 2D R-tree uses $\bigO(n)$ space. 2DReach stores one 2D R-tree, with at most $p$ nodes, per SCC node. Lastly, at most one pointer per SCC node.
\end{proof}
\begin{remark}
    The worst-case space complexity remains equal for the compressed 2DReach version.
\end{remark}

\begin{theorem}
 The construction of the 2DReach index takes $\mathcal{O}(n + d\cdot p\cdot (d + \log(p)))$ time.
\end{theorem}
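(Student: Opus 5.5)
We will bound the running time of \cref{alg:2dreach-construction} line by line, assuming the SCC decomposition $\mathcal{D}$ is supplied as input as in the algorithm. If we also charge for computing $\mathcal{D}$, Tarjan's algorithm adds $\bigO(n+m)$, but the stated bound only has a linear-in-$n$ term. We therefore argue that the only work linear in the original graph is the initialisation loop (lines 3--5). Every remaining step is charged to $\mathcal{D}$ and to the sets $S[\cdot]$, whose sizes are at most $p$.

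\textbf{Initialisation.} Lines 3--5 scan every vertex $v\in V$ once, look up its SCC, and append $v$ to $S[\mathrm{scc}(v)]$ whenever $\delta(v)\neq\perp$. This costs $\bigO(n)$ in total, which accounts for the additive $n$ term. Computing the reverse topological order $\pi$ of $\mathcal{D}$ costs $\bigO(d+e)$. Since $e\le d^2$ and $e\le d\cdot p\cdot d$ as soon as $p\ge 1$, this is absorbed into $\bigO(d\cdot p\cdot d)$. The case $p=0$ is trivial.

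\textbf{Merging.} For each node $n$ we form $S[n]\cup\bigcup_{c}S[c]$ over its at most $d$ children. Each set has size at most $p$. Merging with a boolean marker array over $P$, or with sorted-list merges, costs $\bigO(p)$ per child plus $\bigO(p)$ to reset the markers. This gives $\bigO(d\cdot p)$ per node and $\bigO(d^2 p)$ over all $d$ nodes. Equivalently, we can charge $\bigO(p)$ to each DAG edge, for a total of $\bigO(e\cdot p)\subseteq\bigO(d^2 p)$. This is the $d\cdot p\cdot d$ part of the bound.

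\textbf{R-tree construction.} For each node, building a 2D R-tree on $|S[n]|\le p$ points by bulk loading (e.g.\ STR, which sorts by coordinates) takes $\bigO(p\log p)$. Summed over the $d$ nodes this is $\bigO(d\cdot p\log p)$, the second part of the bound. Adding the three contributions gives $\bigO(n + d\cdot p\cdot(d+\log p))$.

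The main obstacle is the accounting in the merging step and the bookkeeping around $n$ versus $m$. We must argue that union operations cost $\bigO(p)$ per child rather than $\bigO(p\log p)$, so we commit to a marker-array implementation of the union. We must also state explicitly that $\mathcal{D}$ is given, or that its $\bigO(m)$ construction is not charged, so that no $m$ term appears. If repeated insertion into an R-tree were used instead of bulk loading, the $\bigO(p\log p)$ per-tree bound would have to be justified as an amortised bound for a balanced R-tree with constant fan-out. We would cite that standard result rather than reprove it.
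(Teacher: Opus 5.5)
Your proof is correct and follows essentially the same line-by-line accounting as the paper: $\bigO(n)$ to fill $S$, $\bigO(d+e)$ with $e\in\bigO(d^2)$ for the topological order, $\bigO(p)$ insertions for each of $\bigO(d)$ children per node, and $\bigO(p\log p)$ per R-tree. You add details the paper leaves implicit (a marker-array union so each insertion costs $\bigO(1)$, the fact that $\mathcal{D}$ is given as input so no $m$ term appears, and the absorption of $d+e$ into $d^2p$ when $p\ge 1$), but the argument is the same.
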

\begin{proof}
    Storing the spatial nodes in $S$ requires verifying for each node whether it is spatial, $\bigO(n)$. Moreover, the topological order requires linear time, $\bigO(d + e)$, over the SCC decomposition. Note that obviously $e\in\bigO(d^2)$. Furthermore, for each node of $\mathcal{D}$, computing $\mathcal{S}$ might require $\bigO(p)$ set insertions for $\bigO(d)$ children. Lastly, for each node of $\mathcal{D}$, a 2D R-tree is built, which requires $\bigO(p\cdot \log(p))$ time, as each tree has at most $p$ elements.
\end{proof}
\begin{remark}
    In the worst-case time complexity of the compressed 2DReach version, $d$ can be replaced by $d - p$, provided that all spatial vertices are sinks. However, without this assumption, the worst-case time complexity does not differ.
\end{remark}

\begin{theorem}
    The querying time of the 2DReach index is $\bigO(p)$.
\end{theorem}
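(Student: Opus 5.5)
We bound the cost of the query procedure step by step: locating the query vertex's component, then the single 2D R-tree lookup. We treat both the baseline query and the compressed variant of \cref{alg:2dreach-querying}.

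\textbf{Step 1: locating the component.} We first observe that mapping the query vertex $u$ to its SCC node $c$ takes $\bigO(1)$ time. The index stores either a pointer from each graph vertex to its R-tree (standard 2DReach) or the map $\mathcal{M}$ (compressed variant); both are array or hash lookups. In the compressed variant, lines 1--2 of \cref{alg:2dreach-querying} handle a spatial query vertex. The test $\delta(q)\in R$ for an axis-aligned rectangle is four comparisons, hence also $\bigO(1)$.

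\textbf{Step 2: bounding the R-tree search.} The remaining cost is one range query on the R-tree of $c$. By construction (\cref{alg:2dreach-construction}), this tree indexes $S[c]\subseteq P$, so it has at most $p$ entries. Since an R-tree with fan-out at least $2$ has $\bigO(|S[c]|)$ nodes in total, as already used in the space theorem, it suffices to bound the search work by the number of tree nodes.

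\textbf{Main obstacle.} The key point is that R-tree rectangles may overlap, so no logarithmic bound is available in the worst case. We must therefore argue directly that the search visits each node at most once. The argument runs as follows. The recursive range search descends only into children whose MBR intersects $R$ and never revisits a node, because the structure is a tree. Each visit does $\bigO(M)$ rectangle-intersection tests, where $M$ is the constant maximum fan-out. The query may also stop at the first leaf entry inside $R$, but this early termination only reduces the work. Hence the total work is $\bigO(M\cdot \#\text{nodes}) = \bigO(p)$.

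Combining Steps 1 and 2 gives a query time of $\bigO(1)+\bigO(p)=\bigO(p)$.
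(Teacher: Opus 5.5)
Your proposal is correct and follows the same route as the paper, whose proof simply reduces the query to a single R-tree range query over at most $p$ indexed points. You additionally spell out what the paper's one-line proof leaves implicit: the $\mathcal{O}(1)$ vertex-to-component lookup, and the worst-case linear bound for an R-tree search (each of the $\mathcal{O}(p)$ nodes visited at most once, with constant fan-out).
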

\begin{proof}
    Equivalent to R-tree querying time.
\end{proof}
\begin{remark}
    The average querying time for an $n$-element R-tree is $\bigO(\log_{M}(n))$ with $M$ the maximum entries per page. Subsequently, the average querying time for 2DReach is equivalent. 
\end{remark}

\begin{figure*}[t!]
    \centering
    \includegraphics[width=\linewidth]{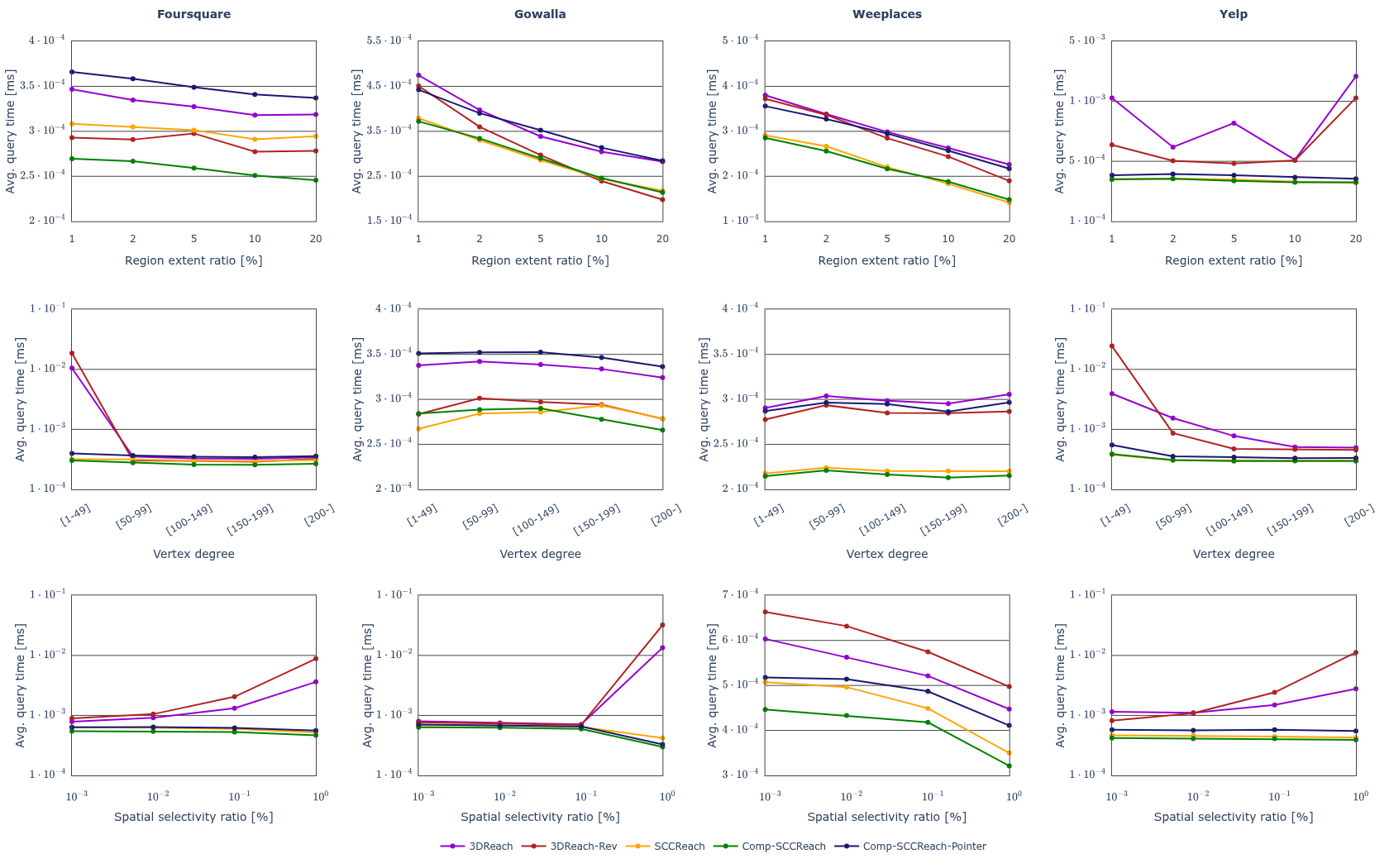}
    \caption{Experimental results displayed in line charts. The columns represent the four datasets, the rows represent the three experimental parameters. The Yelp (all), Foursquare (vertex degree \& spatial selectivity), and Gowalla (spatial selectivity) are displayed on a logarithmic axis; all other plots are illustrated using a linear axis.}
    \label{fig:results_mean}
\end{figure*}

\begin{table*}[t!]
    \centering
    \begin{tabular}{lcc}
        \toprule
        Parameter & (Experimental) Values & Default Value \\ \midrule
        Region Extent Ratio & $1\%, 2\%, 5\%, 10\%, 20\%$ & $5\%$ \\
        Vertex Degree & $[1-49], [50-99], [100-149], [150-199], [200-]$ & $[100-149]$ \\
        Spatial Selectivity Ratio & $0.001\%, 0.01\%, 0.1\%, 1\%$ & N/A \\
        \bottomrule
    \end{tabular}
    \caption{The three parameters for which to evaluate experiments. For each parameter, the (experimental) values are executed. Moreover, in other experiments, the default values are taken.}
    \label{tab:experimental-parameters}
\end{table*}

\begin{table*}[t!]
    \centering
    \begin{tabular}{lccccccc}
        \toprule
        dataset & Users & Venues & Nodes & Edges & SCCs & User SCCs & distinct R-trees \\ \midrule
        Foursquare & 2,119,987 (65.2\%) & 1,132,617 (34.8\%) & 3,252,604 & 19,685,786 & 1,400,154 & 267,537 (19.1\%) & 267,535 \\
        Gowalla & 407,533 (13.0\%) & 2,723,102 (87.0\%) & 3,130,635 & 23,778,362 & 2,723,103 & 1 (0.0\%) & 1 \\
        Weeplaces & 16,022 (1.6\%) & 971,309 (98.4\%) & 987,331 & 2,758,946 & 971,311 & 2 (0.0\%) & 1 \\
        Yelp & 1,987,693 (93.0\%) & 150,310 (7.0\%) & 2,138,003 & 21,357,271 & 1,238,535 & 1,088,225 (87.9\%) & 1,088,225 \\
        \bottomrule
    \end{tabular}
    \caption{Graph and SCC statistics per dataset. The users and venues are non-spatial and spatial nodes, respectively. The percentage of users/venues are compared to nodes. The user SCCs are SCCs without spatial information and their percentage is compared to SCCs. The distinct R-trees are those that have distinct spatial reachability compared to their children.}
    \label{tab:dataset-statistics}
\end{table*}

\begin{table*}[t!]
    \centering
    \begin{tabular}{lccccc}
        \toprule
        dataset & 3DReach & 3DReach-Rev & 2DReach & 2DReach-Comp & 2DReach-Pointer \\ \midrule
        Foursquare & 4.841 & 5.928 & 3.529 & 3.257 & \textbf{3.241} \\
        Gowalla & 7.909 & 8.764 & 5.400 & 4.656 & \textbf{4.639} \\
        Weeplaces & 1.225 & 1.745 & 0.803 & 0.681 & \textbf{0.676} \\
        Yelp & 4.372 & 5.427 & 2.663 & 2.669 & \textbf{2.661} \\
        \bottomrule
    \end{tabular}
    \caption{Index construction time [secs]: To build the SCC decomposition for each method. Also, to build the (reverse) interval labelling and 3D R-tree for 3DReach and 3DReach-Rev, the R-trees and the pointer (and bit vectors) for the 2DReach methods.}
    \label{tab:index-time-methods}
\end{table*}

\begin{table*}[t!]
    \centering
    \begin{tabular}{lccccc}
        \toprule
        dataset & 3DReach & 3DReach-Rev & 2DReach & 2DReach-Comp & 2DReach-Pointer \\ \midrule
        Foursquare & 35.1 (20.5 / 14.6) & 78.5 & 49.3 (23.3 / 26.0) & 40.3 (14.2 / 26.0) & \textbf{16.6} (14.2 / 2.4) \\
        Gowalla & 71.2 (49.4 / 21.8) & 164.1 & 73.0 (47.9 / 25.0) & 51.2 (26.1 / 25.0) & \textbf{26.7} (26.1 / 0.5) \\
        Weeplaces & 25.4 (17.6 / 7.8) & 58.5 & 25.0 (17.1 / 7.9) & 17.2 (9.3 / 7.9) & \textbf{9.5} (9.3 / 0.2) \\
        Yelp & 31.4 (2.7 / 28.7) & 79.8* & 39.0 (21.9 / 17.1) & 37.8 (20.7 / 17.1) & \textbf{29.7} (20.7 / 8.9) \\
        \bottomrule
    \end{tabular}
    \caption{Index size [MBs]. Values in parentheses show (R-tree storage / labelling storage) for 3DReach and (R-tree storage / pointer storage) for 2DReach variants. *Larger than reported in~\cite{bouros_fast_2025} (44.8 MB), likely due to differences in Boost library versions.}
    \label{tab:index-size-methods}
\end{table*}

\section{Experiments}\label{sec:experiments}
This section discusses the experimental evaluation of the 2DReach. First, the experimental setup is discussed. Afterwards, the experimental results are evaluated. All experiments are performed on a laptop with 64GB of RAM, an Intel(R) Core(TM) Ultra 9 275HX CPU clocked at 3.80 GHz, and Ubuntu 24.04 LTS. Moreover, the code is written in C++17 and compiled with gcc (v13.3.0) with the \textsc{-O3} flag enabled and available on GitHub\footnote{\codeLink}.

\subsection{Experimental Setup}
For the experiments, we utilise four real-world datasets: Yelp\footnote{www.yelp.com/dataset}, Foursquare~\cite{levandoski_lars_2012,sarwat_lars_2014}\footnote{https://academictorrents.com/details/b24c73949308b3f6bdd8fea1a485534392eef338}, Gowalla~\cite{liu_exploiting_2014}\footnote{The Gowalla \& Weeplaces datasets are downloaded from the download url from 3DReach~\cite{bouros_fast_2025}: https://seafile.rlp.net/d/6fa3703e57234b6d9831/.}, and Weeplaces\footnotemark[\value{footnote}]. The datasets are Location-Based Social Networks (LBSNs), which consist of a social graph of users (nodes without spatial information) and venues (nodes with spatial information). Moreover, for these types of graphs, the venues have no outgoing edges as they can only be visited, checked into, or reviewed. \cref{tab:dataset-statistics} presents statistics regarding the datasets. These datasets are also used for the 3DReach evaluation~\cite{bouros_fast_2025}.

During the experiments, we evaluate the following methods:
\begin{itemize}
    \item \textsc{3DReach}: The state-of-the-art 3DReach index from~\cite{bouros_fast_2025}. We use the source code provided by the authors\footnote{https://github.com/pbour/rangereach}.
    \item \textsc{3DReach-Rev}: A variant of 3DReach that uses reversed interval labeling to represent spatial vertices as vertical line segments, reducing each query to a single 3D range query~\cite{bouros_fast_2025}. We use the source code provided by the authors\footnotemark[\value{footnote}]. 
    \item \textsc{2DReach}: The standard 2DReach described in \cref{sec:2dreach} without compression mechanisms. 
    \item \textsc{2DReach-Comp}: The 2DReach described in \cref{sec:compressed-2dreach} that employs the various compression mechanisms.
    \item \textsc{2DReach-Pointer}: The 2DReach-Comp with the addition that the pointers from the SCC decomposition nodes are only stored for non-spatial components instead of nodes, and utilise a bit vector to retrieve the correct R-tree.
\end{itemize}

A series of experiments is conducted across distinct parameters to evaluate the performance of the methods. For each parameter, there are 1000 queries executed per value. Each such experiment is repeated ten times, and the median is reported as the final result. The different parameters for the experiments are the following:
\begin{description}
    \item[Region Extent Ratio] The region extent ratio is the size of the query region as a percentage of the entire spatial region. 
    \item[Vertex Degree] The vertex degree is the outgoing degree of the query vertex within a specified range.
    \item[Spatial Selectivity Ratio] The spatial selectivity ratio is the number of spatial nodes inside the query region as a percentage of the number of nodes in the geosocial graph.
\end{description}

For each parameter, \cref{tab:experimental-parameters} lists the values for which the experiments are executed. The default values are subsumed when executing experiments for the other parameters, except for the spatial selectivity ratio. The latter can take any value when conducting experiments with the other parameters. We follow the experimental methodology established by 3DReach~\cite{bouros_fast_2025} to enable direct comparison.

\subsection{Evaluation}
This section evaluates the different methods based on the described experiments. First, the initialisation process and the storage overhead are discussed. Then, the query performance between different methods is evaluated.

\subsubsection{Initialization \& Storage Overhead}
All methods require an initialisation phase to build the index for a geosocial graph and store it for query execution. Each method requires an SCC decomposition. Furthermore, the 3DReach and 3DReach-Rev methods also create an interval labelling scheme, which the former stores. Additionally, the 2DReach methods require building and storing pointers from nodes to the R-trees. Moreover, the 2DReach-Pointer also stores extra bit vectors. Lastly, all methods need to build and store their R-tree(s).

\cref{tab:index-time-methods} presents the durations, in seconds, for the initialisation phase of each method. The results show that all 2DReach methods outperform the construction time of 3DReach and 3DReach-Rev. The 3DReach and 3DReach-Rev construction times are primarily dominated by their labelling, which accounts for the noticeable differences compared to other methods. 

\cref{tab:index-size-methods} provides an overview of the index sizes. It indicates that the storage overhead of the standard 2DReach is already only within at most 40\% additional overhead. However, when social complexity increases significantly (Yelp and Foursquare), this method becomes less competitive. This is due to the increase in social SCCs, which require more complex 2D R-trees. 2DReach-Pointer improves substantially across all cases, as storage overhead increases with more social components in standard 2DReach and non-spatial components are significantly less than the node count.

As the number of spatial nodes increases, 2DReach-Comp improves over 2DReach (Gowalla and Weeplaces). Due to a decrease in R-trees necessary for spatial sinks. \cref{tab:dataset-statistics} further reveals that reusing children's R-trees has little effect, reducing the number by at most two. However, in low-SCC cases such as Weeplaces, this has a more consequential impact. Finally, due to compression and pointer reduction, 2DReach-Pointer continues to outperform all other methods in terms of storage overhead.

Notably, 2DReach-Pointer achieves smaller index sizes than both 3DReach variants despite storing multiple R-trees. This is because 3DReach must store interval labels for all super vertices, which can amount to millions of intervals in large graphs. In contrast, 2DReach leverages the social SCC membership directly by forgoing 2D R-tree structures for spatial sinks and avoiding interval labelling entirely. Furthermore, 2DReach-Pointer shares R-trees between parent and children when possible, and its 2D R-tree nodes require smaller bounding boxes (4 floats) compared to 3D R-tree nodes (6 floats). The combination of R-tree sharing, exclusion of spatial sinks from the decomposition, and elimination of interval labels explains why the theoretical worst-case space of 2DReach does not materialise in practice.

\subsubsection{Query Time}
Finally, we analyse the query performance of the different methods. The results of these experiments are illustrated in \cref{fig:results_mean}. All Yelp experiments, Foursquare vertex degree and spatial selectivity ratio experiments, and Gowalla spatial selectivity ratio experiment are depicted with a logarithmic scale. All other experiments are shown on a linear scale to accommodate differences in query times. 

Firstly, across nearly all region extent ratio values, except for Yelp, both 3DReach(-Rev) achieve their best performance. This also applies to the Weeplaces experiments and the Gowalla vertex-degree experiments. In these cases, the 2DReach methods still perform up to 30\% better. This difference is explained by performing simpler 2D queries instead of 3D queries.

Other experiments show a more exponential behaviour from 3DReach(-Rev). In the Foursquare vertex degree and Gowalla spatial selectivity experiments, both 3DReach methods perform on par with 2DReach. However, for specific parameter values, their query performance is orders of magnitude slower. In contrast, 2DReach mitigates this exponential growth and maintains relatively consistent query performance regardless of parameter values.

In all Yelp experiments and the Foursquare selectivity experiment, the 3DReach(-Rev) are generally an order of magnitude slower than the 2DReach variants. A probable reason is that, irrespective of the graph's structure, 2DReach queries only a 2D R-tree. The most significant advantage of 2DReach is its ability to stabilise and optimise query performance across all experiments, preventing substantial degradation in query performance. 

Finally, observe that 2DReach and 2DReach-Comp exhibit a similar query performance, although 2DReach-Pointer is consistently slower than both. This difference can reach up to a factor of 30\%, and 2DReach-Pointer loses the advantage in high-performance experiments compared to 3DReach(-Rev), unlike other 2DReach variants. However, when the 3DReach(-Rev) performance declines exponentially, 2DReach-Pointer remains comparable to 2DReach and 2DReach-Comp. Evidently, the loss in query performance is due to the need to retrieve the R-tree pointer via the SCC node and bit vectors.

\section{Conclusion}\label{sec:conclusion}
We introduced 2DReach, a simpler, faster, and smaller alternative to the state-of-the-art 3DReach~\cite{bouros_fast_2025} for geosocial reachability queries. By eliminating interval labelling, 2DReach replaces 3DReach's 3D R-tree with per-component 2D R-trees, simplifying both index construction and query execution. Compressed variants further reduce storage by excluding spatial sinks and sharing R-trees between components with identical reachable sets. Experiments on four real-world datasets show that 2DReach achieves faster index construction, with 2DReach-Pointer yielding the smallest index size among all methods while delivering stable query performance, up to an order of magnitude faster than 3DReach on datasets with high social complexity.

Future work could explore the applicability of 2DReach to other spatial graph domains and investigate persistent R-trees to improve R-tree sharing across components.

\section*{Artifact Availability}
The source code and experimental scripts are available at \codeLink.

\bibliographystyle{ACM-Reference-Format}
\bibliography{literature}

@inproceedings{agrawal_efficient_1989,
    title = {Efficient Management of Transitive Relationships in Large Data and Knowledge Bases},
    doi = {10.1145/67544.66950},
    booktitle = {Proceedings of the 1989 {ACM} {SIGMOD} International Conference on Management of Data},
    address = {Portland, Oregon, USA},
    publisher = {ACM Press},
    author = {Agrawal, Rakesh and Borgida, Alexander and Jagadish, H. V.},
    year = {1989},
    pages = {253--262},
}

@misc{bouros_fast_2025,
    title = {Fast {Geosocial} {Reachability} {Queries}},
    url = {https://openproceedings.org/2025/conf/edbt/paper-13.pdf},
    doi = {10.48786/EDBT.2025.03},
    language = {en},
    urldate = {2025-06-03},
    publisher = {OpenProceedings.org},
    author = {Bouros, Panagiotis and Chondrogiannis, Theodoros and Kowalski, Daniel},
    year = {2025},
}

@misc{sun_georeach_2016,
    title = {{GeoReach}: {An} {Efficient} {Approach} for {Evaluating} {Graph} {Reachability} {Queries} with {Spatial} {Range} {Predicates}},
    shorttitle = {{GeoReach}},
    url = {http://arxiv.org/abs/1603.05355},
    doi = {10.48550/arXiv.1603.05355},
    urldate = {2025-07-08},
    publisher = {arXiv},
    author = {Sun, Yuhan and Sarwat, Mohamed},
    month = mar,
    year = {2016},
    note = {arXiv:1603.05355 [cs]},
}

@inproceedings{levandoski_lars_2012,
    address = {USA},
    series = {{ICDE} '12},
    title = {{LARS}: {A} {Location}-{Aware} {Recommender} {System}},
    isbn = {978-0-7695-4747-3},
    shorttitle = {{LARS}},
    url = {https://doi.org/10.1109/ICDE.2012.54},
    doi = {10.1109/ICDE.2012.54},
    urldate = {2026-01-21},
    booktitle = {Proceedings of the 2012 {IEEE} 28th {International} {Conference} on {Data} {Engineering}},
    publisher = {IEEE Computer Society},
    author = {Levandoski, Justin J. and Sarwat, Mohamed and Eldawy, Ahmed and Mokbel, Mohamed F.},
    month = apr,
    year = {2012},
    pages = {450--461},
}

@article{sarwat_lars_2014,
    title = {{LARS}*: {An} {Efficient} and {Scalable} {Location}-{Aware} {Recommender} {System}},
    volume = {26},
    issn = {1558-2191},
    shorttitle = {{LARS}*},
    url = {https://ieeexplore.ieee.org/document/6427747},
    doi = {10.1109/TKDE.2013.29},
    number = {6},
    urldate = {2026-01-21},
    journal = {IEEE Transactions on Knowledge and Data Engineering},
    author = {Sarwat, Mohamed and Levandoski, Justin J. and Eldawy, Ahmed and Mokbel, Mohamed F.},
    month = jun,
    year = {2014},
    pages = {1384--1399},
}

@inproceedings{liu_exploiting_2014,
    address = {New York, NY, USA},
    series = {{CIKM} '14},
    title = {Exploiting {Geographical} {Neighborhood} {Characteristics} for {Location} {Recommendation}},
    isbn = {978-1-4503-2598-1},
    url = {https://dl.acm.org/doi/10.1145/2661829.2662002},
    doi = {10.1145/2661829.2662002},
    urldate = {2026-01-21},
    booktitle = {Proceedings of the 23rd {ACM} {International} {Conference} on {Conference} on {Information} and {Knowledge} {Management}},
    publisher = {Association for Computing Machinery},
    author = {Liu, Yong and Wei, Wei and Sun, Aixin and Miao, Chunyan},
    month = nov,
    year = {2014},
    pages = {739--748},
}

\end{document}